\definecolor{gray}{rgb}{0.8,0.8,0.8}
\definecolor{lgray}{rgb}{0.9,0.9,0.9}
\def\ader{\curlyeqprec}
\def\ared{\curlyeqsucc}
\def\bired{\approxeq}
\def\F{\mathcal{F}}
\def\T{\mathcal{T}}
\def\Q{\mathcal{Q}}
\def\R{\mathcal{R}}
\def\t{\tau}
\def\c{\zeta}
\begin{document}
\title{Exploiting Universal Redundancy}

\author{\IEEEauthorblockN{Ali Shoker}
\IEEEauthorblockA{HASLab, INESC TEC \& University of Minho\\
Braga, Portugal\\
Email: ali.shoker@inesctec.pt}
}

\maketitle
\begin{abstract}
Fault tolerance is essential for building reliable services; however, it comes at
the price of redundancy, mainly the ``replication factor'' and ``diversity''.
With the increasing reliance on Internet-based services, more machines (mainly
servers) are needed to scale out, multiplied with the extra expense of 
replication. This paper revisits the very fundamentals of fault tolerance and 
presents ``artificial redundancy'': a formal generalization of ``exact copy'' 
redundancy in which new sources of redundancy are exploited to build fault tolerant systems. 
On this concept, we show how to build ``artificial replication'' and design
``artificial fault tolerance'' (AFT). 
We discuss the properties of these new techniques
showing that AFT extends current fault tolerant approaches to use other forms of
redundancy aiming at reduced cost and high diversity.

\end{abstract}

\begin{IEEEkeywords}
	Artificial fault tolerance, redundancy, replication.
\end{IEEEkeywords}

\section{Introduction}
\label{sec:intro}
Fault tolerance (FT) is the central pillar of reliable 
services~\cite{Gray:1991:HCS,Gartner:1999:FFD,Chandra:2007:PaxosLive,
Castro:1999:PBFT,Castro:2003:BASE}.
A fault tolerant system must employ some form of redundancy
\emph{Redundancy in space} is often used against a specific pre-defined class 
of faults, like \emph{fail-stop, fail-recovery, Byzantine}~\cite{Gray:1991:HCS,
Castro:1999:PBFT}, and employs a fault tolerant protocol that ensures the 
correctness of the system through replication and agreement techniques, e.g., 
\emph{TMR, consensus, failure detectors}~\cite{Lyons:1962:TMR,Gray:1991:HCS,
Chandra:1996:UnreliableFD}. 
Unfortunately, these techniques are costly and their effectiveness are 
questionable due to the \emph{replication factor} and \emph{diversity}.

In particular, in order to tolerate $f$ 
faults, a FT protocol requires a minimum number of redundant 
components (i.e., replicas), called the \emph{replication factor}, which
is often $2f+1$ and can spans up to $5f+1$ in some 
fault models~\cite{Chandra:2007:PaxosLive,Shvachko:2010:HDFS,
Abd-El-Malek:2005:QU,Kotla:2007:Zyzzyva}. 
The thorough reliance on Internet services nowadays amplifies these costs as
more machines (either virtual of physical) are required to cope with the 
increasing demand on services; due to replication, this cost is multiplied by 
(at least) a factor of two or three.
On the other hand, independence of failures between replicas is a 
major assumption for the correctness of FT protocols: a common fault among 
system replicas can bring them to fail unanimously, rendering the replication 
factor useless~\cite{Knight:1986:NVersionExper,
Eckhardt:1985:TheoryNVersion,Eckhardt:1991:NVersionExper,Castro:2003:BASE}. 
This is usually mitigated by introducing some software and hardware \emph{diversity} 
in the replicated components on different axes and 
levels~\cite{Obelheiro:2006:OSDiversityAxes,Liming:1995:NVersion,
Roeder:2010:ProactiveObfus,Schneider:2004:distrTrust,Castro:2003:BASE,
Hughes:1987:hardwareDiv,Garcia:2011:OSDiverse,Baumann:2009:multikernel}.
Although these approaches improve the FT of 
systems through diversity, they are costly and not always effective~\cite{Knight:1986:NVersionExper,Liming:1995:NVersion,Castro:2003:BASE}. 
For instance, \emph{N-version programming}~\cite{Liming:1995:NVersion,Gray:1991:HCS} 
introduces diversity into software through coding it in multiple versions by 
different independent teams and programming languages. In addition to its high
cost in practice, it has been shown that dependency still exists since (1) developers
often fall in the same mistakes and (2) as long as different 
versions originate from a common specification~\cite{Knight:1986:NVersionExper,
Liming:1995:NVersion}.
Another interesting approach is \emph{proactive recovery} between diverse obfuscated 
components that are generated to be semantically equivalent using a secret 
key~\cite{Roeder:2010:ProactiveObfus, Schneider:2004:distrTrust}; though interesting,
it is only effective in transient failures and when the key is kept 
secret~\cite{Obelheiro:2006:OSDiversityAxes,Cox:2006:N-Variant}.
\emph{BASE}~\cite{Castro:2003:BASE} introduces design diversity using \emph{Components 
Off-The-Shelf} (COTS). The idea is to employ pre-existing diverse components that 
have similar behavior, and then write software wrappers to mimic the state machine 
behavior. Although this approach is promising, it is limited to the existence of COTS in various programming languages.

In this paper, we revisit the very fundamentals of fault tolerance and introduce 
\emph{artificial redundancy} considering the ``redundant information''
inferred through the 
``action on a component'' rather than the ``component'' itself:
a component is artificially redundant to another one if there is a strong 
correlation between them, even if they are non similar in behavior or semantics. 
For example, two always opposite buffers $A=-B$ are artificially redundant
since there is a perfect (though negative) correlation between them, 
as if they are exact copies. 
On the other hand, ``the presence of ice'' is artificially redundant, with some 
\emph{uncertainty}, to ``the atmospheric temperature is low'' since 
they are strongly (but not perfectly) correlated. 

\emph{Artificial replication} can then be achieved by making an artificially 
redundant component an artificial replica, \emph{artira} for short. 
The idea is to wrap the component by an \emph{adapter} to code (resp., decode) the 
input (resp., output) of an artira as needed using component-specific (mathematical 
or probabilistic) transformation functions. 
Adapters are similar to the \emph{conformance wrappers} used in 
BASE~\cite{Castro:2003:BASE}; however, we apply it to completely independent,
but correlated, components instead of those of similar behaviors allowing for
some uncertainty (if needed). 
\emph{Artificial fault tolerance} (AFT) is therefore achieved using replicas and
artira, e.g., using voting or agreement, in a similar fashion to current FT 
protocols. When artira are perfectly correlated, existing FT protocols can
be used with higher reliability due to the increased diversity of artiras.
On the other hand, if artiras include some uncertainty (bounded or unbounded),
new variants of AFT protocols are needed; in Section~\ref{sec:aft}, we only give some
recommendations on how to build such protocols (due to size limitation).

Our work is complementary to existing FT models (in fact, AFT subsumes FT); and it 
is important for three main reasons:
(1) it exploits new forms of redundancy to reduce the cost of replication that can,
optionally, include some uncertainty;
(2) it achieves equivalent or better tolerance to faults than classical FT being built
on highly diverse components in terms of behavior, specification, providers, 
geographic location, etc.; and
(3) it makes it possible to achieve lower levels of fault tolerance, e.g., detection,
if some uncertainty is accepted by the application and when extra ``exact copy'' replicas 
do not exist or are not affordable; or if the service could not be easily replicated
(e.g., redundant medical instruments, large scale social networks, etc.).

AFT may be criticized in two main ways.
First, although AFT achieves high reliability in the case of perfect correlation,
uncertainty in fault tolerance does not seem reasonable in other cases.
Our argument is that, uncertainty even exists (under the hood) in current FT protocols;
in fact, there is no evidence that FT protocols achieve 100\% certainty due to dependence 
of failures~\cite{Eckhardt:1985:TheoryNVersion}.
On the other hand, the uncertainty we introduce here is agreed upon beforehand
(e.g., in the SLA), and thus caution can be planned a priori. In addition, 
uncertainty in fault tolerance also 
do exist in literature as in~\cite{Leners:2013:FailureInformer,
Chandra:1996:UnreliableFD}. The second criticism is how practical is AFT. In fact, similar concepts have been studied and used in specific areas like automotive systems~\cite{ren:2008:distributedConsensus}, clock synchronization~\cite{lamport:1985:synchronizingClocks,mahaney:1985:inexact} and Byzantine approximate agreement~\cite{Dolev:1986:ApproximateAgree}, and we believe that these concepts can be generlized to a wider spectrum of distributed applications and services where even new forms of redundancy can be exploited as we do here.
We show in Sections~\ref{sec:feasibility} and~\ref{sec:apps} that AFT can be applied to 
a large span of applications as in webservices, multithreading, HPC, etc.
On the other hand, several applications can accept some uncertainty in FT 
to reduce the cost of replication or when additional replicas do not 
exist~\cite{Montagnat:2005:MedicalImage,Sarmenta:2002:FTVolunteer}. 

To the best of our knowledge, this subject has not been studied in literature as we 
generlize here, and we believe that it is worth more investigation 
and empirical study in the future. 

\section{Artificial Redundancy}
\label{sec:ared}

\subsection{Notations}
\label{sec:not}
Consider a component $X$ that is associated with a set of possible 
actions in $A$.
In general, an action can modify the state of X; however, for ease of 
presentation, we assume that actions are read-only and we explicitly mention
\emph{writes} when needed.
We denote by $a(x)$ the output of an action $a\in A$ on a state $x\in X$, 
and by $X^a$ the range (i.e., output domain set) of $a$ on any state in X; 
we read this ``X subject to action a''.
Related, $X^A$ denotes the range of any action in A on any state in X, we read 
this ``X subject to actions in A''.
We also assume that $(X^a,d)$ is a \emph{metric space} with a defined distance $d$. 
We argue that any meaningful component output falls in this category.
If $x$ is in a metric space $S$, a ``closed ball'' $N_r(x)$ of center $x$ and 
radius $r$ is the set: $N_r(x)=\{y: y\in S, d(x,y)\leq r\}$; then we say
that $x$ is \emph{in the neighborhood} of y.

\subsection{Artificial Redundancy}
\begin{definition}[Artificial Redundancy]
A component $X$ subject to action $a$ (i.e., $X^a$) is artificially redundant to 
component $Y$ subject to action $b$ (i.e., $Y^b$), denoted $X^a \ared Y^b$, iff there is a 
correlation $\c\in ]0,1]$ between them.
\label{def:artRed}
\end{definition}

The above definition is very relaxed as it makes any two components redundant 
to each other regardless of their behavior or semantics provided that there 
is a correlation between them.
For instance, the atmospheric ``temperature forecast'' component on action 
$\mathsf{getTemperature()}$ is strongly correlated to the 
``snow forecast'' component on action $\mathsf{isFalling()}$, and hence, 
they are artificially redundant.
Although, artificial redundancy often makes sense when there is strong or
perfect correlation (whether +ve or --ve), we do not explicitly 
mention the \emph{strength} of correlation $\c$ to keep the definition general.
The definition does not specify a correlation method to use; however,
in general, any statistical correlation model can be used. 
Definition~\ref{def:artRed} is fine-grained to an individual action 
of a component (e.g., a function in a service API); however, in practice, 
services may not be equivalent (i.e., have different APIs); consequently, 
only parts of a service might be artificially redundant; this
is captured by the following definition.

\begin{definition}[Partial Artificial Redundancy]
A component $X$ subject to actions in $A$ ($X^A$), is partially artificially 
redundant to component $Y$ subject to actions in $B$ ($Y^B$) iff there exist at 
least one action $b\in B$ and a corresponding action $a\in A$ such that $X^a$ 
is artificially redundant to $Y^b$. 
In particular, if this is true $\forall b\in B$ (a \emph{surjection}) then we say 
that $X$ is artificially redundant to $Y$, and denoted by $X^A \ader Y^B$ 
or $X\ared Y$ when A=B.
\label{def:partArtRed}
\end{definition}

Definition~\ref{def:partArtRed} governs the entire component when actions $A$
and $B$ are similar (e.g., if $X$ and $Y$ have the same API) as well as a subset of 
actions of the component (e.g., only some API functions are redundant).
Artificial redundancy is often one-way relation as explained in 
Lemma~\ref{lem:symArtRed}. 

\begin{lemma}[Symmetry of Artificial Redundancy]
If $X^a\ared Y^b$ then $X^a\ader Y^b$, therefore we say $X^a \bired Y^b$. 
To the contrary, if $X^A\ared Y^B$ then $X^A \ader Y^B$ is not necessarily true.
\label{lem:symArtRed}
\end{lemma}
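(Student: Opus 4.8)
The plan is to handle the two assertions by different means: the single-action symmetry rests on a property of correlation, whereas the failure of symmetry at the component level is best shown by a counterexample.

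For the first assertion I would unfold Definition~\ref{def:artRed}: the statement $X^a \ared Y^b$ means precisely that there is a correlation $\c \in ]0,1]$ between $X^a$ and $Y^b$. The single load-bearing fact is that any standard statistical correlation is a \emph{symmetric} quantity, so the correlation of $Y^b$ with $X^a$ equals that of $X^a$ with $Y^b$ and again lies in $]0,1]$. Reading this same witness in the opposite direction yields $X^a \ader Y^b$, and since both directions now hold, the composite relation $X^a \bired Y^b$ follows by definition. I would stress that this is independent of the sign of the correlation — a perfect negative correlation is still a symmetric relation — so no case analysis on $+$ versus $-$ is needed.

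For the second assertion I would exhibit a small counterexample, since the claim is only that the implication need \emph{not} hold. The asymmetry is entirely a matter of the quantifier structure in Definition~\ref{def:partArtRed}: $X^A \ared Y^B$ demands that every action of one component admit a correlated partner in the other, and this coverage requirement is \emph{directional}. Concretely, take $A=\{a_1,a_2\}$ and $B=\{b_1\}$, and suppose the only correlation is between $X^{a_1}$ and $Y^{b_1}$, with $X^{a_2}$ uncorrelated to $Y^{b_1}$. Then every action of $B$ is covered (namely $b_1$ by $a_1$), so $X^A \ared Y^B$ holds; but $a_2$ has no correlated partner anywhere in $B$, so the reverse coverage fails and $X^A \ader Y^B$ does not hold.

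The main obstacle is conceptual rather than computational: explaining why the per-pair symmetry established in the first part does not propagate to whole components. The resolution is that single-action redundancy is a genuinely symmetric relation on action pairs, but the component relation wraps a coverage quantifier around it, and covering every action of the smaller set $B$ says nothing about covering every action of the larger set $A$. The unequal cardinalities $|A|>|B|$ in the counterexample are exactly what expose this gap, and making that gap explicit is the heart of the argument.
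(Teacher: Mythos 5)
Your proposal is correct and follows essentially the same route as the paper: part one rests on the symmetry of statistical correlation applied to Definition~\ref{def:artRed}, and part two on the observation that some action $a'\in A$ may have no correlated partner in $B$, breaking the surjection required by Definition~\ref{def:partArtRed}. Your explicit counterexample with $A=\{a_1,a_2\}$, $B=\{b_1\}$ is just a concrete instantiation of the generic failure mode the paper describes, and if anything it is slightly more rigorous than the paper's ``there may exist an action $a'$'' phrasing.
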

\begin{proof}
First part: If $X^a \ared Y^b$, then there exists a correlation between 
$X^a$ and $Y^b$; since correlation is symmetric by definition, then $X^a\ader Y^b$
follows from Definition~\ref{def:artRed}.
Second part: follows from Definition~\ref{def:partArtRed} since there may 
exist an action $a'\in A$ without any correlation to any action in $B$ 
(i.e., no surjection from $Y^B$ to $X^A$ exists).
\end{proof}

Since the goal of artificial redundancy is to establish (artificial) replication, 
we introduce the following lemma which makes this task easier.

\begin{lemma}[Probabilistic Artificial Redundancy]
If $X^a \ared Y^b$ then $\exists \beta \in [0,1]$ and at least two predicates: 
$\R$ defined on $X^a$ and $\Q$ defined on $Y^b$ such that: $P(\R|Q)\geq \beta$. 
\label{lem:probArtRed}
\end{lemma}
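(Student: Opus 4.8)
The plan is to unpack the definition of artificial redundancy ($X^a \ared Y^b$) into the existence of a correlation coefficient $\c \in\, ]0,1]$, and then exhibit the promised constant $\beta$ together with the two predicates $\R$ and $\Q$ so that the conditional probability bound $P(\R\mid\Q)\geq\beta$ holds. The key insight I would exploit is that a strictly positive correlation between two (random) quantities forces their joint behavior to deviate from independence, so that suitably chosen events on $X^a$ and $Y^b$ must co-occur with probability bounded away from the product of their marginals. Since the ambient sets $X^a$ and $Y^b$ are metric spaces, I would define the predicates in terms of closed balls $N_r(\cdot)$ from the Notations section, letting $\R$ assert membership in some neighborhood of a reference output in $X^a$ and $\Q$ assert membership in the correlated neighborhood in $Y^b$.

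The concrete steps I would carry out are as follows. First, I would fix the statistical model underlying the correlation $\c$ — treating $a(x)$ and $b(y)$ as random variables over the output domains — since Definition~\ref{def:artRed} deliberately leaves the correlation method unspecified. Second, from $\c>0$ I would argue that there exist a value (or region) in $X^a$ and a correlated value (or region) in $Y^b$ whose conditional occurrence is elevated; in the perfect-correlation case $\c=1$ this is a deterministic functional dependence, giving $P(\R\mid\Q)=1$, and I would present this as the boundary case. Third, for intermediate $\c$ I would define $\R$ and $\Q$ as membership in matched neighborhoods and set $\beta$ to be the resulting (strictly positive) conditional probability, so that $\beta\to 1$ as $\c\to 1$. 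The existence claim in the lemma is weak — it only asserts that \emph{some} $\beta$, $\R$, $\Q$ exist — so it suffices to construct one witnessing triple rather than characterize all of them.

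The main obstacle I anticipate is that the lemma, as stated, is only as strong as the (unstated) correlation model permits: a bare correlation coefficient $\c>0$ does not by itself guarantee a useful conditional-probability bound without some regularity assumption on the joint distribution, and one can engineer distributions with positive linear correlation yet with every interesting conditional probability arbitrarily small. The delicate part of the argument is therefore to choose the predicates $\R$ and $\Q$ \emph{adaptively} to the correlation structure — picking the neighborhoods where the joint mass concentrates — rather than fixing them in advance. I would handle this by invoking the metric-space structure to select radii $r$ making the balls large enough to capture nontrivial correlated mass while keeping $\beta>0$, and I would be explicit that $\beta$ depends on $\c$ and on the chosen predicates. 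If a fully general correlation model resists a clean bound, I would fall back to stating the result for correlation notions (e.g.\ any model in which $\c>0$ implies positive association of some pair of events), which is the intended reading given the paper's framing of $\c$ as the bridge to probabilistic replication.
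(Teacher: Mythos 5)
Your proof is correct in substance but takes a genuinely different route from the paper's. The paper's own proof is far more direct (and more cavalier): it simply asserts that the correlation $\c$ yields some $\beta$ such that for every state $y \in Y^b$ a corresponding state $x \in X^a$ exists with probability at least $\beta$, then takes $\R$ to be the predicate ``$x$ exists'' and $\Q$ to be ``$y$ exists'', and reads off $P(\R|\Q) \geq \beta$. It never fixes a statistical model, never uses the metric structure, and never separates the perfect-correlation case. You instead construct the predicates concretely as membership in matched closed balls $N_r(\cdot)$, treat $\c = 1$ as a deterministic boundary case with $P(\R|\Q)=1$, and choose the radii adaptively so that the correlated mass is captured --- a more rigorous construction that actually specifies what the events are as subsets of $X^a$ and $Y^b$, something the paper's ``exists'' predicates leave undefined. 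The obstacle you anticipate (that a bare correlation coefficient need not yield a nontrivial conditional bound) is real, but note that it threatens the \emph{meaningfulness} of the lemma rather than its truth: since the statement only requires some $\beta \in [0,1]$, the choice $\beta = 0$ makes it satisfiable by any pair of predicates whatsoever, so both your argument and the paper's are really constructions of witnesses for a claim that has little content unless $\beta$ is explicitly tied to $\c$. The paper buys brevity at the price of rigor; your approach buys well-defined events and an explicit dependence of $\beta$ on the correlation, at the price of regularity assumptions on the joint distribution that the paper never states and that its relaxed Definition~\ref{def:artRed} cannot supply.
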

\begin{proof}
	If $X^a \ared Y^b$ with correlation $\zeta$, then $\exists \beta \in [0,1]$ 
	such that for any state $y\in Y^b$ there exists a corresponding state 
	$x\in X^a$ with probability $\geq \beta$. Let the predicate 
	$\R=${``$x$ exists''} and $\Q=${``$y$ exists''}, then the 
	probability that $\R$ occurs given that $\Q$ occurs is $\geq \beta$.
	Therefore, $P(\R|Q)\geq \beta$.
\end{proof}

Lemma~\ref{lem:probArtRed} says that if there are two correlated services, one
can probably find two correlated predicates that are bounded (e.g., by $\beta$).
This lemma will make it easier to establish correlations and lookup new
redundancy forms in the next section.

\section{Artificial Replication}
\label{sec:artira}

\subsection{Definitions}
Artificial redundancy remains useless without the ability to transform artificially
redundant components to replicas that can be used in practice. We make this 
possible by introducing \emph{artificial replication} in Definition~\ref{def:artira}.

\begin{definition}[Artificial Replication]
A component $X^a$ is said to be an artificial replica (\emph{artira} for short) of 
$Y^b$ iff there exists $\alpha \in [0,1]$, $\epsilon \in Y^b$, and a 
\emph{transformation} function $\F:X^a\goes Y^b$ such that 
$\forall y\in Y^b, \exists x\in X^a$ such that
$P\left( \hat x=\F(x)\in N_\epsilon(y)  \right)\geq \alpha$. 
We denote this by: $X^a\in \F_\epsilon^\alpha(Y^b)$.
\label{def:artira}
\end{definition}

Informally, this means that a component X (subject to an action $a$) 
is an \emph{artira} of Y (subject to an action $b$) if we can find a function $\F$ 
such that for every state $y\in Y^b$ there is a state $x\in X^a$ such that $\F(x)$ 
is in the neighborhood of $y$ with some accuracy $\epsilon$ (i.e., the error is bounded) and 
certainty $\alpha$ (i.e., the bound is precise). 
An artira is defined in a triple $(\F,\alpha,\epsilon)$ whose
values must be defined a priori. Notice that, $X$ is an artira of $Y$ means that 
$Y$ is a reference replica and need not to be an artira of $X$.
In principle, $\F$ is used to transform the output $a(x)$ of an action $a\in A$ on 
$x\in X$ to a value $\hat x=\F(a(x))\in Y^b$ such that $\hat x$ is  
close, with distance $\epsilon$, to some $y\in Y^b$ with 
certainty $\alpha$. The two metrics $\alpha$ and $\epsilon$ are strongly 
related and should be adjusted together: $\epsilon=0$ reflects 100\% accuracy of 
$\F$ whereas $\alpha$ tells if this is correct all the time. 
Increasing $\epsilon$ makes the accuracy of $\F$ lower but with better certainty $\alpha$.
We show in Section~\ref{sec:feasibility} how can tuning $\alpha$ and $\epsilon$
bring interesting benefits.

\subsection{Building an Artira}
\label{sec:build}

\begin{lemma}
Artificial redundancy is necessary, but not sufficient, for artificial replication.
\label{lem:necessary}
\end{lemma}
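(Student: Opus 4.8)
The plan is to prove the statement in two independent halves: the necessity direction as an implication (artificial replication $\Rightarrow$ artificial redundancy), and the failure of sufficiency as a counterexample that refutes the converse.

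For necessity I would start from the assumption that $X^a$ is an artira of $Y^b$, i.e.\ $X^a \in \F_\epsilon^\alpha(Y^b)$ with $\alpha > 0$ (the degenerate case $\alpha=0$ being vacuous). By Definition~\ref{def:artira} the transformation $\F$ places, for every reference state $y \in Y^b$, some transformed state $\hat x = \F(x)$ inside the ball $N_\epsilon(y)$ with probability at least $\alpha$. The key observation is that this conditional placement is precisely a probabilistic dependence between the states of the two components. I would make this explicit by setting $\R = $ ``$\hat x \in N_\epsilon(y)$'' on $X^a$ and $\Q = $ ``$y$ is the reference state'' on $Y^b$, so that $P(\R \mid \Q) \geq \alpha > 0$. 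Here I would run the construction of Lemma~\ref{lem:probArtRed} in reverse: that lemma manufactures bounded correlated predicates out of a correlation, whereas the artira hypothesis already hands me exactly such a bounded positive conditional probability, from which I recover a nonzero correlation $\c \in\; ]0,1]$. By Definition~\ref{def:artRed} this is precisely $X^a \ared Y^b$, establishing necessity.

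For the failure of sufficiency I would exhibit a pair of components that are correlated yet admit no transformation meeting the artira requirement for every $y$. The cleanest candidate is the paper's own snow/temperature pair: the binary ``snow forecast'' under $\mathsf{isFalling()}$ is positively correlated with the continuous ``temperature forecast'' under $\mathsf{getTemperature()}$, so $X^a \ared Y^b$ holds with $\c > 0$. The decisive point is that Definition~\ref{def:artira} quantifies universally over $y \in Y^b$: it demands that for every reference temperature there exist a state of the snow component whose image under $\F$ lands within $\epsilon$ of that temperature with certainty $\alpha$. A two-valued source cannot cover a continuum of $\epsilon$-neighborhoods once $\epsilon$ is small, so no function $\F : X^a \goes Y^b$ can satisfy the condition for all $y$. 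This is an information-theoretic (cardinality) obstruction, and it shows the converse fails, completing the proof.

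The main obstacle I anticipate lies in the necessity direction, specifically in formalizing the passage from ``positive conditional probability $P(\R \mid \Q) \geq \alpha$'' to ``nonzero correlation $\c$''. Since Definition~\ref{def:artRed} deliberately leaves the correlation model unspecified, the argument must either treat Lemma~\ref{lem:probArtRed} as the operative characterization of correlation, or appeal to a generic statistical correlation for which independence is equivalent to $\c = 0$. Pinning down this equivalence, rather than constructing the counterexample, is where the real care is required.
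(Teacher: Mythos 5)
Your proposal is correct, and its necessity half is essentially the paper's own argument: the paper defines the same two events ($\Q$ being ``$y=v$ for $v\in Y^b$'' and $\R$ being ``$x=u$ with $\hat x=\F(u)\in N_\epsilon(y)$''), reads Definition~\ref{def:artira} as giving $P(\R|\Q)\geq\alpha$, and then concludes redundancy ``as per Lemma~\ref{lem:probArtRed}''. In doing so the paper silently commits exactly the move you flag as the main obstacle: Lemma~\ref{lem:probArtRed} is stated as redundancy $\implies$ bounded conditional probability, while necessity needs its converse, and the paper offers no justification for the reversal. Your diagnosis—that one must either treat the lemma as a two-way characterization of correlation or posit a correlation model in which $\c=0$ coincides with independence—is precisely the repair the paper's proof also needs. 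Where you genuinely diverge is the non-sufficiency half: the paper dispatches it in one line (``trivial from Definition~\ref{def:artira} since $\F$ does not always exist''), with no witness, whereas you construct an explicit counterexample (binary snow forecast against continuous temperature) ruled out by a cardinality obstruction; that turns an assertion into a proof. One caveat on that counterexample: Definition~\ref{def:artira} existentially quantifies both $\alpha$ and $\epsilon$, so read literally it is satisfied degenerately by $\alpha=0$ or by any $\epsilon$ exceeding the diameter of $Y^b$. Your argument therefore needs either the paper's intended reading that the triple $(\F,\alpha,\epsilon)$ is fixed a priori by the application (with $\alpha>0$ and $\epsilon$ small relative to the temperature range), or an unbounded output space $Y^b$; stating one of these explicitly would close the gap.
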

\begin{proof}
Necessary condition: consider the two predicates (i.e., events): $\Q=\{y=v | v \in Y^b\}$ 
and $\R=\{x=u\in {X^a} | \hat x=\F(u)\in N_\epsilon(y)\}$. 
From Definition~\ref{def:artira}, if $X^a$ is an artira of $Y^b$ then 
$\forall y\in Y^b, \exists x\in X^a$ such that 
$P\left( \hat x\in N_\epsilon(y)  \right)\geq \alpha$ for some $\alpha\in[0,1]$; 
hence, if $\Q$ occurs then $P(\R)\geq \alpha$, i.e., $P(\R|Q)\geq \alpha$.
Therefore, $X^a$ is artificially redundant to $Y^b$ as per 
Lemma~\ref{lem:probArtRed}.\\
Sufficient condition: trivial from Definition~\ref{def:artira} since $\F$ does not
always exist.
\end{proof}

\begin{figure}[t]
\centering
\includegraphics[width=.15\textwidth]{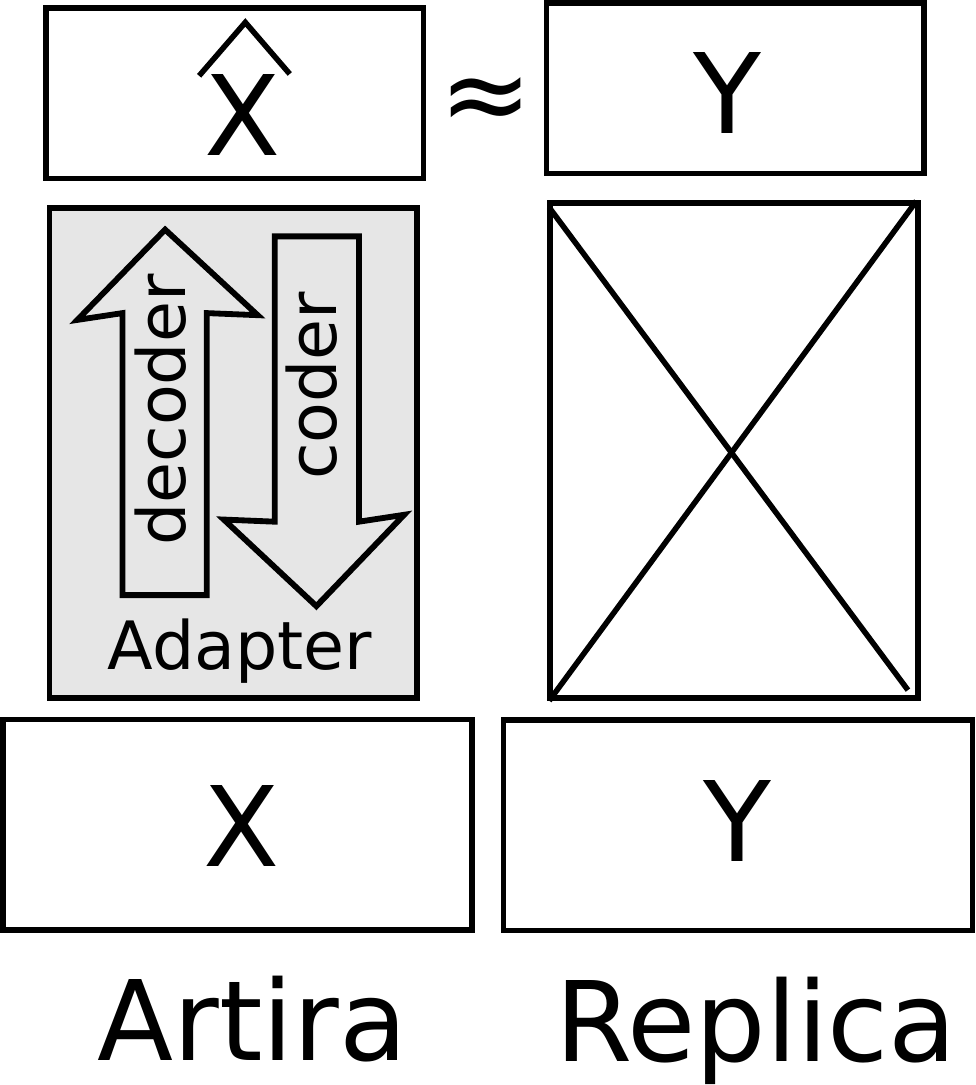}
\caption{Artira vs. Replica.}
\label{fig:artira}
\end{figure}

Lemma~\ref{lem:necessary} means that artiras can be built from artificially redundant 
components if the transformation $\F$ exists.
We argue that such a transformation can often be defined if a strong correlation 
exists either by using accurate mathematical formulas or through using probabilistic
prediction techniques (e.g., using Machine Learning), in which some uncertainty must 
be accepted by the application.

Building an artira $X^a$ of an existing replica $Y^b$ starts by 
defining the accepted accuracy $\epsilon$ and certainty $\alpha$ by the application.
Then, if some strong correlation between $X^a$ and $Y^b$ exists, 
$X^a$ can likely be an artira of $Y^b$. Then, if a transformation $\F$ can be defined
with some accuracy $\epsilon'$ and certainty $\alpha'$, $\epsilon'$ and $\alpha'$ are 
adjusted by incrementing $\epsilon'$
to get a higher certainty $\alpha'$. Finally, $X^a$ is accepted
as an artira of $Y^b$ with the triple $(\F,\alpha', \epsilon')$ if:
$\alpha'\geq \alpha$ and $\epsilon'\leq \epsilon$.
Building an artira $X^A$ of $Y^B$, i.e., considering multiple actions in $A$
and $B$, is done in a similar way by individually defining corresponding triples
as $(\F,\alpha,\epsilon)$.

\textbf{Adapters.} The transformation logic is then implemented in a wrapper on top of $X^a$,
called \emph{adapter}. Fig.~\ref{fig:artira} shows the architecture of an artira
with an adapter versus a replica. An adapter can be state-full or stateless as 
\emph{conformance wrappers} which were explained thoroughly in BASE~\cite{Castro:2003:BASE},
and therefore we skip this discussion here.
\emph{Read} operations use a \emph{decoder} that implements $\F$ to transform 
outgoing values from the artira. \emph{Update} operations, however, use a \emph{coder} 
to write into the artira, which requires an inverse function $\F^{-1}$ to be
defined. In this case, the parameters $\epsilon'$ and $\alpha'$ must be adjusted
to consider the uncertainty of $\F^{-1}$ if it is not a \emph{perfect} inverse of 
$\F$, since a read value will be affected twice by the uncertainty of both $\F$ and 
$\F^{-1}$. However, this is not required when $\F^{-1}$ is a perfect inverse
(e.g., mathematical inverse function) of $\F$, since a value will be read exactly as it
was previously written via the adapter (e.g., if $\F(x)=1/x$ and 
$\F^{-1}(x)=1/x$, then $\F(\F^{-1}(x))=x$). A reasonable cost must be paid while building
an artira as discussed in Section~\ref{sec:aft}.

\section{Artificial Redundancy and Replication Models}
\label{sec:feasibility}

\begin{table}
	\caption{An abstraction of a simple component and the relation between 
two possible components.}
	\begin{tabular}{ll}
	&\\\hline
\textbf{Component:}& $A_i$\\\hline
$\val$:& a value of any type.\\
$\expose(\val)$:& a read-only function that exposes the value of $\val$.\\
$\modify(\val)$:& an update function that modifies the value of $\val$.\\\hline
&\\\hline
\textbf{Relation}& $A_i$ and $A_j$\\\hline
$\t$:&Correlation threshold above which $\c\geq \t$ is accepted.\\
$\T(\val_i,\val_j)$:& a relation upon which a transformation function $\F$ is 
defined.\\\hline
\end{tabular}
\label{fig:compAbst}
\end{table}

We discuss the different artificial redundancy and replication models and their
theoretical feasibility by considering a simple abstraction $A_i$
in Table~\ref{fig:compAbst}. More complex abstractions can intuitively be built
on top of it, but this is enough to serve for explanation.
$A_i$ is composed of a single value $\val$ that represents the state of $A_i$;
whereas $\expose$ and $\modify$ represent the read and write actions, respectively,
that are accessible by any other abstraction $A_j$ (which may have different $\val$ 
type and actions). 
We also represent the relation between $A_i$ and $A_j$ by the correlation threshold 
$\t$ and the relation $\T$, where $\t$ is the minimum correlation coefficient above 
which (inclusive) a service accepts components to be artificially redundant, 
whereas $\T$ materializes the correlation between two components be defining a 
transformation function $\F$ that may comprise some uncertainty as described.
Based on this, we distinguish between interesting artificial replication and 
redundancy models summarized in Table~\ref{tab:models}.

\begin{figure}[t]
\centering
\includegraphics[width=.5\textwidth]{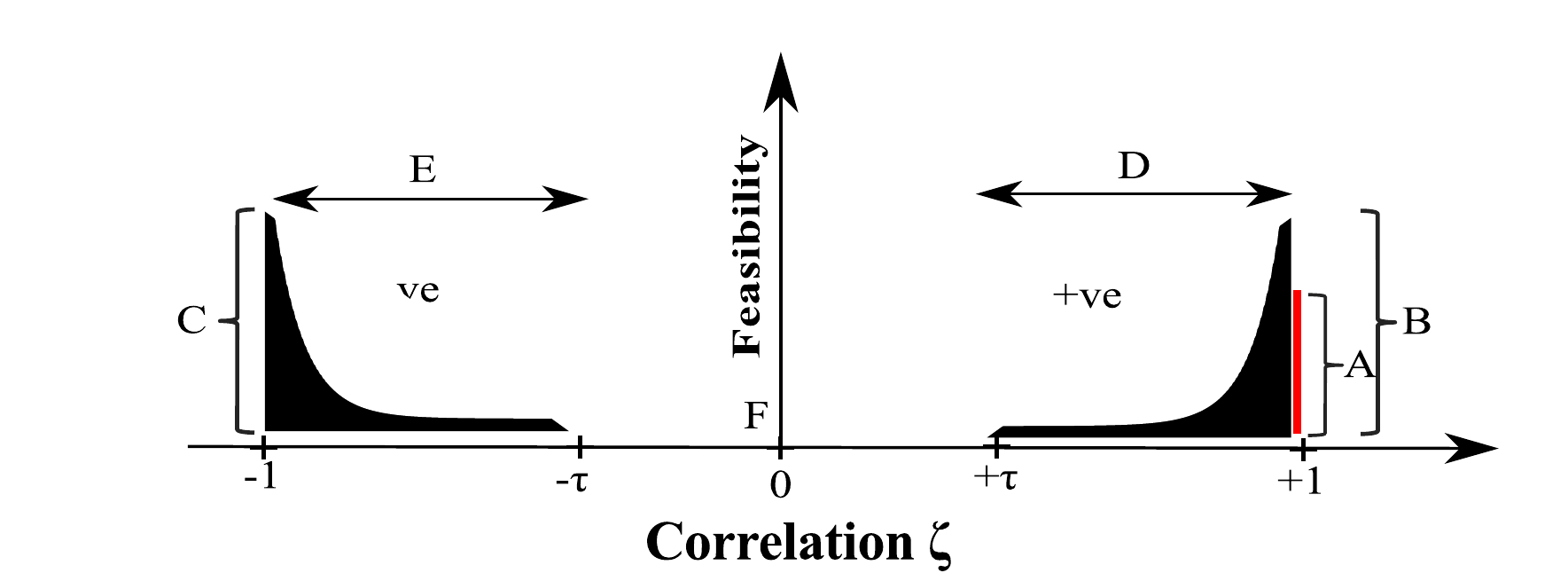}
\begin{tabular}{l|l}
A: Exact copy&	B: Perfect +ve correlation\\
C: Perfect -ve correlation& D: +ve correlation\\
E: -ve correlation&F: No correlation
\end{tabular}
\caption{Feasibility space of artificial redundancy.}
\label{fig:corr-space}
\end{figure}

\begin{table*}[t]
	\caption{The different models, settings, an use-cases of artificial replication
and based on Table~\ref{fig:compAbst}.}
\centering
\begin{tabularx}{\textwidth}{p{.6cm}p{1.3cm}p{1.8cm}X}
\textbf{Model}&\textbf{Correlation}&\textbf{Description}&\textbf{Use-case}\\\hline\hline
\multirow{3}{*}{PAR}&EC&\footnotesize{Exact Copy.}
&\footnotesize{A replicated process having $\T=\{\F(\val_i)=\val_j\}$.}\\\\
&PC+&\footnotesize{Perfect +ve}\newline \footnotesize{Corr.} $\t=+1$&
\footnotesize{Two weather forecast processes where $A_i$ returns
temperature in \emph{Celsius} $\val_i=C^\circ$ and $A_j$ 
in \emph{Fahrenheit} $\val_j=F^\circ$;
then $\T= \{\F(\val_i)=(\val_j -32) \times 5/9\}$.}\\\\
&PC--&\footnotesize{Perfect --ve}\newline \footnotesize{Corr.} $\t=-1$&
\footnotesize{Two processes $A_i$ and $A_j$ having a shared buffer or 
token and using $\val$ as a flag; 
thus $\T=\{\F(\val_i)=-\val_j\}$.}\\\hline
SAR&BSC&\footnotesize{Bounded Strong Corr.} $|\t| \geq 0.5$&
\footnotesize{Two medical diagnosis instruments: cardiac pulse meter $A_i$ and 
Electrocardiogram $A_j$ with sensors $\val_i$ and $\val_j$ (resp.); 
since both monitor heart activity, $\val_i$ and $\val_j$ are strongly 
correlated with some acceptable error $e$ bounded by $\delta$; 
therefore, $\T=\{\F(\val_i)= \val_j\pm e | e\leq \delta\}$.}\\\hline
WAR&USC&\footnotesize{Unbounded Strong Corr.} $|\t| \geq 0.5$& 
\footnotesize{Two social-media profile recommender services that are statistically 
correlated but not bounded due to the tradeoff between accuracy 
and over-fitting. 
If the prediction function has an acceptable unbounded error $e$ and accuracy $p$; 
then $\T=\{\F(\val_i)= \svr(\val_i)= 
\val'_i| P(|\val'_i-\val_j|\leq \pm e)\geq p\}$.}\\\hline
\end{tabularx}
\label{tab:models}
\end{table*}

For ease of presentation, we explain the different models and feasibility with the
help of an ``imaginary'' feasibility spectrum depicted in Fig.~\ref{fig:corr-space}.
Since the ultimate goal is to build fault tolerant systems, which is often the basic
defense layer in a service, critical services tend to adopt very strong
correlation (e.g., $\t$ is close to 1) and, gradually, fewer ones accept lower correlation
coefficients. Therefore, our conjecture argues that the number of applications 
decreases (resp., increases)
\emph{exponentially} to $\t$ (resp., $-\t$) as the correlation coefficient $\c$ approaches 
zero.
Notice that, theoretically, $\t$ can be close to zero; however, it is merely meaningful 
only when $\t>0.5$, i.e., when a strong correlation exists.
Now, we discuss the different models.

\subsection{Perfect Artificial Redundancy and Replication (PAR)}
\label{sec:par}
Perfect artificial redundancy refers to the case in which application FT requirements 
only accept perfect positive correlations between components, i.e., $\t=\pm 1$; 
meaning that the information inferred by one component through the adapter is the exact information
of the other with zero error. Consequently, PAR is the most 
interesting and desirable model to achieve fault tolerance. 
The feasibility is depicted in locations A, B, and C on Fig.~\ref{fig:corr-space}. 
This is mapped to Table~\ref{tab:models}. 
A refers to EC case which is the unique acceptable case in current FT.
Artificial redundancy expands this case to use other redundancy
sources as in PC+ (corr., B) and PC-- (corr., C) with the same confidence as 
if they were exact replicas, as in A.
From the perspective of artificial replication, this case refers to the 
configuration: $(\F,\alpha=1,\epsilon=0)$. As shown in the use-cases of settings 
EC, PC+, and PC--, the function $\F$ transforms $\val_i$ to $\val_j$ 
without any error ($\epsilon=0$) and with 100\% certainty ($\alpha=1$).

\subsection{Strong Artificial Redundancy and Replication (SAR)}
\label{sec:sar}
Strong artificial redundancy (SAR) refers to the case in which a small 
bounded error is tolerable as in BSC case in Table~\ref{tab:models}.
Though SAR is weaker than PAR, it is useful for some applications to 
avoid high costs of exact copies when high certainty is acceptable. 
Of course, such applications are much fewer than those of PAR; 
however, they do exist in practice as 
we explain in Section~\ref{sec:apps}.
In Fig.~\ref{fig:corr-space}, this is depicted in the gradient color regions 
D and E. The dense color indicates more applications, showing that 
the more interesting cases are those when $\tau$ is closer to 1. 
In general, artificial replication is represented by $(\F,\alpha\neq 1,\epsilon\neq 0)$ 
in SAR case; however, the parameters $\alpha$ and $\epsilon$ can be tuned since
the inaccuracy is bounded. Thus, it may be suitable to increase $\epsilon$ so 
that a greater certainty $\alpha=1$ can be achieved  and thus SAR becomes
$(\F,\alpha=1,\epsilon\neq 0)$.
To explain this, consider the use-case in BSC settings in Table~\ref{tab:models}.
In this case, the medical instruments $A_i$ and $A_j$ can infer slightly
different cardiac pulse $\val$ that is bounded by $\delta$. Then, setting 
$\epsilon:=\delta$ such that $\alpha=1$ can be a good choice to get high certainty. 
This actually means that, the artificial replication is 100\% accurate with an allowed 
error of $\delta$ cardiac pulses. We show how this is useful in Section~\ref{sec:aft}.

\subsection{Weak Artificial Redundancy and Replication (WAR)}
\label{sec:war}
Weak artificial redundancy (WAR) refers to USC settings in Table~\ref{tab:models}. 
It is similar to SAR in the correlation threshold $\t$ and in 
feasibility (regions D and E in Fig.~\ref{fig:corr-space}). However,
in WAR, the transformation error could not be bounded, and thus certainty is
never 100\%.
Consequently, weak artificial replication has the configuration: 
$(\F,\alpha\neq 1, \epsilon\neq 0)$ where $\alpha$ and $\epsilon$ could 
not be tuned to have $\alpha=1$ (though tuning is possible in general).
Note that, it is not always required to use artificial redundancy to make decisions, 
but to help making decisions like the use-case in USC settings.
For instance, a fault raised 
by an artira can give an alert to do some ``external'' action (e.g., to explicitly 
verify correctness). Of course, WAR case is not suitable for sensitive applications that
do not accept any uncertainty; however, it can be used in situations that are not
critical to reduce the overhead of replication.

\section{Artificial Fault Tolerance (AFT)}
\label{sec:aft}

\subsection{Definitions}
\begin{definition}[Artificial fault tolerance]
Artificial fault tolerance (AFT) is the approach used to achieve fault tolerance
in a system of redundant components where at least one component is 
artificially redundant.
\label{def:aft}
\end{definition}

Definition~\ref{def:aft} is general as it covers any sort of artificial 
redundancy. The use of redundancy is important since it
is necessary to achieve fault tolerance~\cite{Gartner:1999:FFD}.
Note that requiring at least one artificial redundant component means 
that there may exist other components that
are not necessarily artificial (since artificial redundancy is not 
symmetric, as in Lemma~\ref{lem:symArtRed}). 

In the rest of this paper, we focus on artificial replicas (\emph{artiras}) 
as redundant components since it can be used to build AFT protocols.
Due to size limitations, we address AFT protocols that can be used in active
replication as in fault correction (or fault omission) through 
consensus~\cite{Schneider:1990:StateMachine,Chandra:2007:PaxosLive,Castro:1999:PBFT,
Gray:1991:HCS}. 
This is the dominant 
approach in practice since it gives the user an illusion that no 
faults occurred despite their actual presence. 
The main purpose of the protocols is
to maintain a consistent system state and ensure that a single value is 
reported by the system to the client.
Other forms of fault tolerance, like fault detection, can easily be 
derived from this study.
In the following, we show how current FT protocols can be adjusted to
support artiras, which are the building blocks of AFT protocols. 
We also discuss the properties of the system in this case and the 
differences to FT protocols. 

\subsection{Recall FT Protocols}
\label{sec:ft}

In order to cover a wide range of FT protocols, we focus on the common 
concepts across existing protocols and we explicitly
address any protocol-specific properties. 

Consider a system of $n$ nodes (e.g., replicas) where $f$ of them 
can be faulty (regardless of the fault model). In general, a 
client can send requests to one or more nodes and collect the 
\emph{received} replies from one or more nodes too. 
To ensure correctness, consensus (or agreement) between nodes must be 
achieved. Many FT protocols assume network-partition faults which may split
the nodes into two or more partitions~\cite{Lamport:1998:Paxos,Castro:1999:PBFT}.
Therefore, they only require a \emph{quorum} of nodes to reply correctly. 
To ensure correct Write and 
Read requests, the intersection of a Read quorum and a Write quorum
must be correct (non-faulty). A common approach is to choose the 
quorum to be the majority of nodes (also called majority consensus), 
e.g., $\frac{n}{2}+1$; however, to support other protocols too, 
like 2PC and 3PC~\cite{Bernstein:1987:phaseCommit}, 
we simply refer to this quorum as $q$. A FT protocol achieves 
consensus through three main phases: ~\emph{Propose}, \emph{Accept}, 
and \emph{Learn}. 
\begin{itemize}
	\item Propose: a value is proposed to agree upon.
	\item Accept: a proposed value is accepted by nodes if a 
		quorum $q$ of nodes agree on it after following some
		message exchange pattern.
	\item Learn: the learner (often the requester) accepts the request
		if a quorum $q$ of replies match, and \emph{learns} the matching 
		value.
\end{itemize}

This notation
is analogous to the phases used in the well-known protocols in literature as 
Paxos and BFT~\cite{Lamport:1998:Paxos,Castro:1999:PBFT}; 
however, it also covers other protocols regardless of which node 
is acting as \emph{proposer}, \emph{acceptor}, or \emph{learner}, 
and how these phases are designed. We do not discuss 
message exchange patterns and delivery assumptions of 
an FT protocol since they are often the same as in AFT protocols 
(explained next). Committing a request is also protocol-dependent as it can occur in 
the Accept or Learn phases. 
In some protocols, Reads may not follow these phases 
except for the Learn phase in which a \emph{requester} sends the request 
directly to all nodes and collects enough matching responses. 
In existing FT protocols, the matching logic $\ftmatch$ to approve a request 
by the acceptors and the requester simply requires a quorum $q$ of responses 
$\mathsf{r_k}$ to be equal:
\begin{equation}
	\mathsf{\ftmatch = card(R_q)\geq q;\ R_q=\{i\neq j~\vert~r_i=r_j \}}
	\label{eq:ftmatch}
\end{equation}

Obviously, since all the quorum's responses are equal, the committed value $\ftvalue$
by the acceptors, as well as the learned value by the learner (or requester), 
is a single value which corresponds to any response in the quorum: 
\begin{equation}
	\mathsf{\ftvalue =rand(r_i)\ where\ i \in R_q}
	\label{eq:ftvalue}
\end{equation}

\subsection{Designing AFT Protocols}
Designing an AFT protocol starting from a FT protocol is reasonably not hard 
since the mechanics of the three phases is almost the same. The only
sensitive parts are those which require deterministic behavior.
Since in AFT at least one node will be an artira, this 
can incur some inaccuracy in the response returned by the \emph{decoder} 
(as explained in Section~\ref{sec:artira}) which can induce indeterminism
in some cases. This can require modifications in the three phases depending
on the artificial replication model used.
In general, the phases in an AFT protocol are defined as follows:

\begin{itemize}
	\item Propose: one value is proposed to agree upon.
	\item Accept: \emph{one or more} proposed values are accepted by nodes if a 
		quorum $q$ of nodes agree on them after following some
		message exchange pattern.
	\item Learn: the learner (often the requester) accepts the request
		if a quorum $q$ of replies \emph{match with some uncertainty}, 
		and learns a \emph{chosen value} according to some \emph{policy}.
\end{itemize}

The uncertainty induced by the artiras require different matching logic to 
that in Eq.~\ref{eq:ftmatch} as well since responses may 
not always be equal.
For an AFT model defined by $F_\epsilon^\alpha$, the general 
matching criteria is as follows:
\begin{equation}
	\mathsf{\aftmatch =card(R'_q)\geq q;
		\ R'_q=\{i\neq j|P(d(r_i,r_j)\leq \epsilon)\geq \alpha \}}
	\label{eq:aftmatch}
\end{equation}

Equation~\ref{eq:aftmatch} says that if the distance $d$ (defined in the 
metric space) between two responses is bounded by $\epsilon$ with probability 
$\alpha$ then these responses are considered matching. On the other hand, 
choosing a value by the requester in AFT follows an application-dependent
$\policy$ (e.g., priority, mean value, etc.): 
\begin{equation}
	\mathsf{\aftvalue =\policy(R'_q)}
	\label{eq:aftvalue}
\end{equation}

Next, we discuss the properties of the system and how  
$\aftmatch$ and $\aftvalue$ change according to the replication model used (PAR, 
SAR, or WAR) together with the classical fault models.

\subsubsection{\textbf {Benign Fault Models}}
A fault is considered ``benign'' if the corresponding faulty node
either follows its correct specification or it crashes; examples are:
crash-stop and crash-recovery, as in 2PC, 3PC, and 
Paxos~\cite{Bernstein:1987:phaseCommit,Lamport:1998:Paxos}.
When only benign faults are assumed, we distinguish between the 
following cases:

\paragraph{PAR model}
In the PAR replication model, an AFT protocol has the same design as 
a FT protocol. This is the most desired case
since it is at least as robust as the existing FT case. (Additional
robustness follows from the higher diversity of artiras).
In PAR, the AFT system is
defined by $\F_\epsilon^\alpha$ where $\epsilon=0$ and $\alpha=1$; 
thus, the adapter's coding/decoding is perfect which makes 
the artiras deterministic, and equivalent to replicas in behavior.
Therefore, the matching logic and the learned value become as follows:
\[
	\mathsf{\aftmatch =card(R''_q)\geq q;\ 
		R''_q=\{i\neq j~\vert~d(r_i,r_j)=0 \}}
\]
\[
	\mathsf{\aftvalue =rand(r_i)\ where\ r_i\in R''_q}
\]

Notice that the above equations are exactly equivalent to $\ftmatch$ 
and $\ftvalue$ in Eq.~\ref{eq:aftmatch} and Eq.~\ref{eq:aftvalue}. 
This makes the AFT protocol phases (Propose, Accept, and Learn) 
exactly the same as those defined the FT case in Section~\ref{sec:ft}.
Therefore, in PAR replication model, existing FT protocols can be used.

\paragraph{SAR model}
In the SAR replication model, the AFT system is defined by 
$\F_\epsilon^\alpha$ where $\epsilon\neq 0$ and $\alpha=1$.
This means that the error induced by the adapter's coding/decoding 
is bounded by $\epsilon$ with probability $1$. 
Consequently, the matching logic becomes as follows:
\[
	\mathsf{\aftmatch =card(R_q)\geq q;  
	R'''_q=\{i\neq j~\vert~d(r_i,r_j)\leq \epsilon\}}
\]

Due to this bounded indeterminism, we distinguish between Read and 
Write requests. 
In \textbf{\emph{Write requests}}, the proposer node, in the Propose phase, proposes a 
request value $req_p$. In the Accept phase, a quorum $q$ of nodes 
accept $req_p$ (regardless of the messaging patterns); the
request is then committed by having all non-faulty nodes execute $req_p$ 
in the same order.
However, since artiras are indeterministic, the local state of artiras can 
vary upon execution of $req_p$ within the bound defined by $\epsilon$.
This does not affect the Learn phase since an ACK is enough to
be sent to the requester. 

A pedantic detail is to explicitly check
if the bound $\epsilon$ is respected by having the proposer piggyback
its state to the acceptors. This allows the acceptors to assert
that $\aftmatch$ is satisfied by comparing their local states 
to that of the proposer upon executing the request.
In the protocols where a state value must be piggybacked along with the 
Write response to the requester, two options are possible to choose
the learned value:
\noindent (1) The proposer node that received the request from the requester
can piggyback its state back to the requester after the Accept phase
terminates successfully and the request has been executed by the proposer. 
This however limits the choices of the requester to the state of one node
(the proposer in this case). This might be OK for some applications, but it
might not be desired in others that prefer choosing a value according to 
a certain $\policy$ (e.g., learn a value that corresponds to a replica
instead of an artira). 
\noindent(2) All the states of all nodes are piggybacked to the requester 
in which case the latter can choose the preferred value according to its 
$\policy$. This requires all nodes to share their states in the Accept phase.
In particular,
each node executes the request and replies back to the
proposer or other nodes (depending on the message exchange pattern) 
with its state after execution. 
Since these states are possibly different, 
due to the uncertainty of artiras, the consensus problem is actually 
transformed to a \emph{multi-value} or \emph{Vector consensus}.
Many consensus protocols of this class do exist in literature as 
Interactive Consistency (IC) Vectors~\cite{Pease:1980:Agreement},
Approximate Agreement~\cite{Dolev:1986:ApproximateAgree}, and 
Vector Consensus (VC)~\cite{Correia:2006:VectorConsensus}.
In these approaches, the replicas, and artiras in our case, need to agree on a 
vector (or sub-vector) of correct proposed values instead of a single value.
Using such consensus protocols guarantees that at least a quorum of $q$ values
in the vector $V$ are accepted by acceptors. When $V$ is sent to the requester
in the Learn phase, it can apply its $\policy$ to choose a value from $V$. 

Executing \textbf{\emph{Read requests}} is similar to those of Write requests case.
In some protocols, however, the requester directly sends its Read request to
all nodes, which reply back with their local values to the client, without 
passing through the phases of the protocol described above. 
In this case, the received values can be treated as a vector, and then a 
value is chosen depending on the $\policy$.

A $\policy$ is application-dependent. In some cases, it is enough to choose:
one value randomly, based on some criteria (like $max$ or $min$), or even an 
aggregate value (e.g., $sum$, $mean$). 
We show in Section~\ref{sec:apps} that these policies are sometimes more interesting 
than choosing a single value.

\paragraph{WAR model}
In the WAR replication model, $\epsilon\neq 0$ and $\alpha\neq 1$ since
the error of the adapter coding/decoding could not be bounded. In this
case, the matching policy remains the same as in Eq.~\ref{eq:aftmatch}.
Consequently, it is not recommended to use this model for fault omissions
since there is some probability to report a wrong value to the client.
This can rather be used in fault detection when the application accepts 
some uncertainty and the service could not (or costly to) be easily 
replicated. We do not discuss this case as it is less interesting and 
we only discuss its possible applications in later sections.

\subsubsection{\textbf{Non Benign Faults}}
A fault is said to be ``non benign'' if the corresponding faulty node may not behave
according to its specification even without crashing. Well known examples
are Byzantine and Rational faults~\cite{Castro:1999:PBFT,Aiyer:2005:BAR}. 
Since this can be seen as a form of indeterminism,
we distinguish between the following cases.

\paragraph{PAR model}
In PAR replication model ($F_\epsilon^\alpha$ where $\epsilon\neq 0$ and $\alpha=1$),
an artira behaves exactly as a replica. Therefore, similar to the PAR case
of benign faults above, there are no differences in the design of existing non benign
FT protocols and those of AFT.

\paragraph{SAR model}
In the SAR replication model, the inaccuracy of the adapter coding/decoding 
is bounded by $\epsilon\neq 0$. Therefore, the AFT protocol phases are different from
those of an FT protocol in a similar way to the above discussion of benign FT protocols.
However, since a non benign faulty node can have some 
indeterminism like an artira, this leverages important questions about
how to distinguish between a correct behavior of an artira, 
due to uncertainty, and a faulty replica/artira as defined in the 
fault model like Byzantine, BAR, etc.
~\cite{Lamport:1982:ByzGenerals,Aiyer:2005:BAR}.
This can cause problems under the hood since faulty nodes could skew the results 
of the consensus depending on the $\policy$ used. For instance, if the $\policy=\max$,
then a Byzantine node can always try to reply with the maximum possible value that 
does not violate the bound $\epsilon$. Notice that even if this is within 
the acceptable limits of AFT, it may not be desired. 
We believe that this requires a dedicated research.

\paragraph{WAR model}
This case is similar to that of the benign FT protocols described above.

\section{The Cost of AFT}
\label{sec:cost}
The discussion made in Sections~\ref{sec:intro} and~\ref{sec:aft} shows that
the replication factor raises the cost of fault tolerance to many folds.
This urges the majority of services to use FT techniques that require minimal
number of replicas. Therefore, the norm is often to use the rule $n=2f+1$
and $n=3f+1$ for benign and non benign faults, respectively, and assuming 
$f=1$. Even in this best case, the cost of replication is two or three replicas.
With the growth of computer-based services, the number of servers needed for 
replication will increase faster which imposes more cost on the service owner.
This is not even consistent with the recent sustainability efforts of communities.
The AFT approach we introduce here makes it possible to take advantage of existing
resources to reduce the replication cost. This can stimulate designers
to build their services keeping in mind their potential use as artiras.

However, there is some cost to pay due to: building artiras, new 
protocols, and uncertainty.
For instance, as discussed in Section~\ref{sec:artira}, building artiras requires 
the existence of strongly correlated components. Although this depends on the 
application, it is costly to lookup synergies in a huge space of 
components, and thus it is recommended to designate a fairly small set of 
candidate components to verify correlations. This cost can also be reduced
through using correlation tools as those used in Machine Learning and Data 
Analytics. This tradeoff is also required
while implementing the adapter of an artira which shall not require a large
investment in time and resources; otherwise, the cost will be similar to N-version 
programming~\cite{Liming:1995:NVersion}. Moreover, using artiras from other vendors
can impose additional subscription and accessibility costs as in the cases of 
web-services (refer to Section~\ref{sec:apps}). 
On the other hand, although AFT can use existing FT protocols as in the PAR model,
other models require different protocol variants to consider the induced tolerable
uncertainty. The cost of these protocols can be referred to the 
new forms of agreement (e.g., vector agreement), uncertainty handling, sending full 
replies instead of digests (since replies are not exact), etc. 
We argue that these costs are low compared to the benefits AFT brings in terms of 
using existing components and improving diversity; this also needs a dedicated 
empirical cost study in the future.

\section{Feasibility and Applications}
\label{sec:apps}

\paragraph{AFT Webservices}
An interesting observation is that the web (e.g., web-services, crowdsourcing, 
BigData, etc.)
contains lots of redundant information that is not being used in fault tolerance.
For instance, the leading Web API directory in~\cite{programmableWeb:2015} 
shows that dozens of web-services exist in each API category
(e.g., currency, weather, dictionaries, BigData, etc.). 
Given this, it would be interesting to exploit these redundant sources and use 
them as PAR artiras to design other more reliable service (rewarded by the increased 
diversity and distinct geographic locations of artiras). 
In addition, the SAR model can be used for webservices that can trade some bounded uncertainty 
for reduced cost. Moreover, if using web-services of different behaviors is possible
(e.g., weather forecast and sea level elevation) more reliable webservices 
can be designed using the SAR and WAR models (if statistical correlations exist), 
provided that some uncertainty is accepted. We think that this can introduce 
a new business model in which webservices are provided to clients which in their turn
can build more reliable webservices based on various ones. For instance,
many weather forecast webservices are now available through the SOAP web 
programming model~\cite{Curbera:2002:SOAP}. 
A more robust weather forecast webservice can be built, 
by using other webservices as artiras, and the new service is again sold to clients.
Notice that the overhead of implementing such a webservice is low, using SOAP
for example, and reduces the administrative cost of maintaining the whole
system (since individual artira providers will take care of their webservices).
The above concept can also be applied within the same company. 
For instance, SAR and WAR can be used through using two Google's social media 
services, e.g., Youtube and Google+, as artiras to detect failures in recommender systems, 
e.g., through matching a user profile, interests, preferences, etc. 

\paragraph{AFT in Distributed Programming}
With the introduction of multicore systems, most programming languages 
support multithreading and multiprocessing which significantly increases the 
complexity of software designs. 
This makes software  more prone to errors due to concurrency, shared
resources, memory leaks, etc. Modern programming languages provide methods and 
techniques for error handling and processes recovery. For instance, Erlang 
allows processes to monitor each other for error handling~\cite{cesarini2009erlang}.
If an exception 
is raised by one process, a linked \emph{trapping} process can catch that exception 
and recover the failing process. The exception can comprise useful information 
about the reasons of the crash. The same concept can be used in AFT between 
processes by making some of them artiras to others (e.g., a 
\emph{supervisor} parent process). In this way, processes can raise useful
information that can be redundant to those in other processes even without 
failure and without necessarily using the exception.
Following this approach, AFT can be used among different correlated 
processes and threads that share 
the same resources (drives, tokens, ports, folders etc.), e.g., as in NFS systems.
Another example is to use AFT for Virtual Machines (like JVM) to detect faults 
as in a shared memory prefetching between correlated threads as 
in~\cite{Solihin:2002:MemCorrelation}. 

\paragraph{AFT by Design in HPC}
High Performance Computing (HPC) has recently become a very hot area due to the
high processing power required by Science, Social Network, and in general Big Data.
In HPC, MapReduce is often used to break down a problem into Map and Reduce processes in a 
supercomputer, or simply a large number of processors~\cite{Dean:2008:MapReduce}. 
Maps usually run in parallel
over different processors and Reduce aggregates the final output \emph{emitted} 
by Maps. Since it is costly to replicate the processes for better fault tolerance,
it can be interesting to construct the problem in such a way that different processes
can serve as artiras for others, even if they are 
different~\cite{Schroeder:2010:HPCFailures}. A simple example is the
multiplication of huge matrices in which Map processes are assigned parts of
a matrix like
rows/columns/blocks. If there are patterns in the matrix (e.g., sorting), 
it is not difficult to detect a failure of a Map process if the values emitted by
the adjacent Map processes are captured. Another option is to design the problem
in a way to use processes as artiras, e.g., like using multi-level MapReduce 
steps~\cite{Matrix:2013:HPC}.
In general, \emph{naturally redundant algorithms} can intuitively be split 
into different processes that work in parallel in which they infer some redundant 
information (e.g., computing on the sub-trees of a binary tree) as described
in~\cite{Laranjeira:1991:NaturallyRedundant}. 

\paragraph{Fault Detection}
On the other hand, AFT can be used for fault detection (even in the WAR model). 
In fact, uncertainty and accuracy in failure detectors is well-studied 
in the literature; Failure Informers and Impact FD
~\cite{Leners:2013:FailureInformer,Chandra:1996:UnreliableFD} 
are few examples. These works support our idea that fault tolerance
can accept some uncertainty as in SAR and WAR models. 
Moreover, SAR and WAR can also be used for diagnosis. 
The BSC use-case in Table~\ref{tab:models} is an example for detecting failed medical 
instruments. Similar applications can also accept some error like medical 
imaging~\cite{Montagnat:2005:MedicalImage}, video streaming, self correcting applications
(as genetic algorithms).

\section{Related Work}
\label{sec:related}

Von Neumann introduced the idea of redundancy in software by showing how
reliable ``organisms'' can be synthesized from unreliable components via majority 
``organs''~\cite{VonNeumann:1954ReliableOrganisms}. Most modern fault tolerance 
approaches used this concept through designing protocols with Active Replication as 
in~\cite{Gray:1991:HCS,Schneider:1990:StateMachine,Castro:1999:PBFT,
Chandra:2007:PaxosLive}. In order to tolerate $f$, faults, a number of extra components 
(i.e., replication factor) is required, based on $f$, with the assumption that replicas fail
independently. In this paper, the AFT concept is based on artificial redundancy
where replicas may not be exact-copies. This allows for other relaxed 
FT models that can be useful to reduce the cost of replication.

Fault tolerance protocols assume that replicas fail independently. 
To improve independence of failures, different forms of diversity can be introduced
across different components on different axes and 
levels~\cite{Obelheiro:2006:OSDiversityAxes}:
N-version programming~\cite{Liming:1995:NVersion},
obfuscated software components~\cite{Roeder:2010:ProactiveObfus},
Components off-the-shelf (COTS)~\cite{Castro:2003:BASE},
various hardware~\cite{Hughes:1987:hardwareDiv}, 
diverse OS~\cite{Garcia:2011:OSDiverse}, 
instruction sets~\cite{Kc:2003:RandomInstrSet}, 
memory obfuscation~\cite{Xu:2003:MemRandom}, 
execution and compilation~\cite{Cox:2006:N-Variant}, etc. 
The most popular among these are N-version 
programming, ``proactive recovery'', and using COTS to include design diversity to the
system. N-version programming requires implementing a software by different independent
teams which is very expensive~\cite{Knight:1986:NVersionExper,
Eckhardt:1991:NVersionExper}. The second approach uses
semantically equivalent generated (using a secret key) obfuscated components in 
proactive recovery fashion; this approach is only effective in ``fast'' failures 
and when the key is kept secret as described in~\cite{Obelheiro:2006:OSDiversityAxes,
Cox:2006:N-Variant}. Finally, using COTS, as in BASE~\cite{Castro:2003:BASE} is a 
promising approach as it uses already existing components of similar behaviors,
which reduces the diversity cost (as in N-version programming). 
The approach uses \emph{conformance wrappers} for components to achieve a
similar behavior to state machines~\cite{Schneider:1990:StateMachine}. 
Our work generalizes this idea to use components of different behaviors, 
which has a larger application space and improved diversity. 
In addition, the logic and design of \emph{adapters} in this paper is very similar to
\emph{conformance wrappers} in BASE.

Our idea is close to using diverse assistant systems in automotive systems~\cite{ren:2008:distributedConsensus} where a different system can take over when a critical system fails; however, we generalize this to span generic redundant system that can be used in distributed systems. On the other hand, \emph{clock synchronization}~\cite{lamport:1985:synchronizingClocks,mahaney:1985:inexact} and \emph{Byzantine approximate agreement}~\cite{Dolev:1986:ApproximateAgree} tried to strengthen the validity property through choosing approximate or average values. However, these works addressed systems where replicas (or clocks) are identical. Our work exploits the case where distributed componenets are different in behaviour and semantics; it looks up new redundancy forms even through opposite components in nature and functionallity, which ensures validity through using the ``adapters''.

\section{Concluding Remarks}
\label{sec:con}
This paper introduces a new form of \emph{artificial redundancy}
that is based on the correlations among components rather than on
exact copies or similar behaviors. This allows to exploit new sorts of
redundancy aiming at reducing the cost of replication and improving 
independence of failures. We described how to build artificial replicas,
i.e., artiras, using artificial redundancy and explained the possible
models depending on the correlations. Then, artificial fault tolerance (AFT)
was introduced based on artiras instead of replicas, which induces 
some uncertainty in some cases.
Our work is complementary to previous FT approaches and obviously does not
replace them. AFT PAR model allows to use artiras in a similar fashion 
and certainty as conventional FT models use replicas,
but with higher tolerance to faults due to the increased diversity by artiras. Other
AFT SAR and WAR models can be used if the application tolerates some degree
of uncertainty. In such cases, conventional FT protocols must be adjusted to
include uncertainty. This imposes some cost as explained in Section~\ref{sec:aft}. 
We discussed the feasibility and applications (refer to 
Section~\ref{sec:apps}) of our approach and explained that new business models can
even be introduced on this concept.
We beleive that is interesting to empirically study the tradeoffs between FT and AFT in terms
of fault tolerance, efficiency, and cost in the future.

\bibliographystyle{IEEEtran}
\bibliography{IEEEAbrv,ref}

\end{document}